\documentclass[runningheads]{llncs}

\usepackage{amsmath, amssymb, amsfonts} 
\usepackage{makeidx} 

\begin{document}

\frontmatter

\mainmatter

\title{The smallest programmable machine and the hardness of analyzing it }%

\titlerunning{The smallest programmable machine} 

\author{J. Andres Montoya\inst{1}}%

\authorrunning{Montoya} 

\tocauthor{X (X)}

\institute{Universidad Nacional de Colombia, Bogota\\
\email{jamontoyaa@unal.edu.co}
}

\maketitle%

\begin{abstract}%
We investigate the computational complexity of analyzing the structural and
behavioral properties of deterministic $k$-pebble automata, which represent
a natural framework for studying minimal programmable machines. First, we
provide an explicit construction of a three-pebble automaton $\mathcal{U}%
_{3} $ capable of simulating any deterministic finite automaton (DFA) on a
given input word, establishing a link between pebble automata capabilities,
Kolmogorov complexity, and automatic complexity. Next, we restrict our
architectural framework to two-pebble programmable machines $\mathcal{U}%
_{2,N}$ simulating DFAs whose set of internal states is equal to $\left\{
1,\ldots ,N\right\} $. We analyze the decision problem \textbf{ANAL}$\left( 
\mathcal{U}_{2,5}\right) $, which asks whether a five-state program can
separate two distinct finite words. By establishing a polynomial-time
reduction from the quasi-identity checking problem for finite semigroups, we
prove that \textbf{ANAL}$\left( \mathcal{U}_{2,5}\right) $ is $\mathcal{NP}$%
-complete. Finally, we explore the complexity of separating binary strings
under the Exponential Time Hypothesis (ETH), showing how computational
hardness implies polynomial lower bounds for the size of the smallest DFAs
separating binary strings.%
\end{abstract}%

\section{Pebble automata\label{Section2}}

Let $k\geq 0$. Deterministic $k$-pebble automata are deterministic two-way
finite state automata provided with $k$ unlabeled pebbles which the finite
control can use as markers on the input tape. The study of this restricted
type of Turing machine goes back to Kreider and Ritchie \cite{Kreider}.

\begin{definition}
\label{PebbleAutomata}Let $k\geq 0.$ A deterministic $k$-pebble automaton is
a tuple%
\begin{equation*}
\mathcal{M}=\left( Q,q_{0},\Sigma ,A,\delta \right) ,
\end{equation*}%
where:

\begin{enumerate}
\item $Q$ is a finite set of states.

\item $q_{0}\in Q$ is the initial state.

\item $A\subseteq Q$ is the set of accepting states.

\item $\Sigma $ is the input alphabet.

\item The transition function 
\begin{equation*}
\delta :Q\times \Sigma \times \left( \mathcal{P}\left( \left\{ 1,\ldots
,k\right\} \right) \right) ^{2}\rightarrow Q\times \left( \left\{ 1,\ldots
,k\right\} \right) ^{2}\times \left\{ -1,0,1\right\} 
\end{equation*}%
is deterministic.
\end{enumerate}

Let $\mathcal{M}$ be a $k$-pebble automaton. Automaton $\mathcal{M}$ is a
two-way deterministic finite state automaton provided with $k$ unlabeled
pebbles. This automaton has the following capabilities:

\begin{itemize}
\item It can place its pebbles on the tape.

\item It can sense the pebbles that lie in the current cell.

\item It can pick pebbles from the set of pebbles that lie in the current
cell.
\end{itemize}

Let $\left( q,a,r,s\right) $ be a tuple such that $r+s\leq k$ holds. Suppose
that $\mathcal{M}$ is processing the input $w,$ where:

\begin{itemize}
\item $q$ is the current inner state.

\item $a$ is the character being scanned.

\item $r$ is the number of pebbles that lie in the current cell.

\item $s$ is the number of available pebbles (i.e., the number of pebbles
that lie in cell $0$).
\end{itemize}

Suppose $\delta \left( q,a,r,s\right) =\left( p,h,l,\varepsilon \right) .$
Then:

\begin{itemize}
\item $r+s=h+l.$

\item $\mathcal{M}$ changes its inner state from $q$ to $p.$

\item $\mathcal{M}$ leaves exactly $h$ pebbles in the current cell.

\item $\mathcal{M}$ moves its head in the direction indicated by $%
\varepsilon \in \left\{ -1,0,1\right\} $.
\end{itemize}
\end{definition}

\begin{definition}
We use $\mathcal{REG}_{k}$ to denote the set of languages accepted by
deterministic $k$-pebble automata.
\end{definition}

The \textit{deterministic pebble hierarchy} is the chain of inclusions:

\begin{equation*}
\mathcal{REG}_{0}\subseteq \mathcal{REG}_{1}\subseteq \mathcal{REG}%
_{2}\subseteq \cdots
\end{equation*}%
We have (see Theorem 3.2.3, page 16, \cite{Andrzej Szepietowski}):

\begin{theorem}
\label{PebblesTheorem}
$\mathcal{L} = \bigcup_{0 \leq k} \mathcal{REG}_{k}$.
\end{theorem}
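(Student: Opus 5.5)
We prove the two inclusions separately, where $\mathcal{L}$ denotes deterministic logarithmic space, $\mathcal{DSPACE}(\log n)$.

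\emph{Pebble automata run in logspace.} Let $\mathcal{M}$ be a deterministic $k$-pebble automaton and fix an input $w$ of length $n$. A configuration of $\mathcal{M}$ consists of the inner state, the head position and the positions of the $k$ pebbles, with cell $0$ standing for ``not placed''. A logspace machine stores this tuple using $O(1)+(k+1)\lceil\log(n+2)\rceil$ bits. It computes the arguments of $\delta$ (the number of pebbles on the current cell and the number available) by comparing the stored positions, and then applies $\delta$ to update the tuple. Since the pebbles are unlabeled, the simulator just keeps a canonical sorted tuple of positions. If $\mathcal{M}$ loops, a deterministic logspace machine can detect this by counting steps up to the number of configurations, which is $n^{O(k)}$ and fits in $O(k\log n)$ bits. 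Alternatively it can use Sipser's halting argument. This gives $\mathcal{REG}_k\subseteq\mathcal{L}$ for every $k$.

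\emph{Logspace machines are simulated by pebble automata.} Let $L\in\mathcal{L}$ be decided by a deterministic machine $T$ with a read-only input tape and a work tape of length $c\lceil\log n\rceil$ over a binary alphabet. Group the work tape into $c$ blocks of $\lceil\log n\rceil$ bits. Each block is a number in $[0,n)$, possibly after enlarging $c$ so that each block holds at most $\log n$ bits and hence a value below $n$. Represent each block by the position of one pebble. A further pebble represents $T$'s input head position, and the work-head position (which block, and which bit within the block) is another number below $n$, represented by one more pebble. Thus $k=c+O(1)$ pebbles plus the finite control encode a full configuration of $T$.

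\emph{The main obstacle: arithmetic on pebble positions.} The automaton must read bit $j$ of the number encoded by a pebble, and must flip a bit, i.e.\ add or subtract $2^j$ from a pebble position. It must do this with only a constant number of extra auxiliary pebbles and no counters. The plan is to implement a small library of primitives, each using $O(1)$ auxiliary pebbles:
\begin{itemize}
\item test whether two pebbles coincide;
\item move a pebble by $\pm1$ (to do this, walk the head to the pebble, pick it up, step, and drop it);
\item copy a pebble's position;
\item halve a position, by moving one auxiliary pebble by one cell while another moves by two cells from the left end;
\item double a position, in the same way.
\end{itemize}
Parity is read off during halving. Bit $j$ of $x$ is then obtained by halving a copy $j$ times, with $j$ itself held by a pebble counter that is decremented. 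Adding $2^j$ is done by doubling a copy of the value $1$ exactly $j$ times, then transferring it by simultaneous unit moves. Every primitive returns the head to a pebble it can locate, so the simulation stays deterministic.

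Composing these primitives, each step of $T$ is simulated by a finite program of primitive calls. The resulting automaton accepts exactly $L$, so $L\in\mathcal{REG}_k$ for some $k$. This is the content of the cited Theorem 3.2.3 of \cite{Andrzej Szepietowski}, and we would follow its construction for the details of the primitives.
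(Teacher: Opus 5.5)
The paper gives no proof of this statement. It only cites Theorem 3.2.3 of Szepietowski's book, so your outline has to stand on its own. Your first inclusion is fine. Your second follows the standard route: work-tape blocks are encoded as positions, and halving and doubling are done by pebbles moving at speeds $1$ and $2$, which is the same arithmetic used to show that multihead two-way automata capture $\mathcal{L}$. Two points are missing, however.

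First, in the paper's model the pebbles are \emph{unlabeled}. The transition function sees only the number $r$ of pebbles on the scanned cell and the number $s$ still available. You note this in the easy direction, where it is harmless. In the hard direction, though, you write ``represent each block by the position of one pebble'' and ``walk the head to the pebble, pick it up''. That presupposes the automaton can tell which physical pebble encodes which block, and as written this is not implementable.

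You need an argument that labeled pebbles can be simulated, for instance the following. The finite control stores the left-to-right order, with ties, of the logical pebbles currently on the tape, together with the gap or group the head currently occupies. After each unit head move, the count $r$ at the new cell is positive exactly when the head has reached the next group in the stored order. Each lift or drop updates the record directly. Since at most $k$ pebbles are involved, this is a finite amount of information. This bookkeeping is also what makes your primitives ``test whether two pebbles coincide'' (now read off the stored order) and ``return the head to a pebble it can locate'' well defined.

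Second, to move the simulated work head across a block boundary, the automaton must detect when the within-block index reaches the block length $b=\lfloor\log n\rfloor$. You never say where $b$ comes from. Compute it once at the start: double a pebble starting from value $1$ until a further doubling would run past the right end of the input, and record the number of successful doublings on a dedicated pebble. The index pebble is then compared with that pebble. With these two additions your construction goes through.
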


\section{Simulation of DFAs using pebbles}

We construct a three-pebble automaton $\mathcal{U}_{3}$ which, on input $%
\left( \mathcal{M},w\right) $,

\begin{itemize}
\item simulates the computation of $\mathcal{M}$, on input $w$, and

\item correctly decides whether the string $w$ belongs to $L\left( \mathcal{M%
}\right) $.
\end{itemize}

Let 
\begin{equation*}
\mathcal{M}=\left( Q,\Sigma ,q_{0},A,\delta \right) .
\end{equation*}

We suppose 
\begin{equation*}
Q=\left\{ 1,...,n\right\} ,\text{ }q_{0}=1,\text{ and }A=\left\{
m_{1},...,m_{k}\right\} ,
\end{equation*}%
where $1\leq m_{1}<\cdots <m_{k}\leq n.$ We also suppose $\Sigma =\left\{
1,...,l\right\} .$

First, we assign to $\mathcal{M}$ a string $w_{\mathcal{M}}\in \left\{
0,1\right\} ^{\ast }$ that provides a full description of this automaton.
Given $i\in \mathbb{N}$, we use the symbol $1^{i}$ to denote the unary
string $\underbrace{1\cdots 1}_{i\text{ times }}$. String $w_{\mathcal{M}}$
is a suitable concatenation of the following four strings:

\begin{enumerate}
\item The string%
\begin{equation*}
w_{state}=101101110\cdots 01^{n}
\end{equation*}

\item The string 
\begin{equation*}
w_{alphabet}=10110\cdots 01^{l}.
\end{equation*}

\item The string 
\begin{equation*}
w_{accepting}=1^{m_{1}}01^{m_{2}}0\cdots 01^{m_{k}}.
\end{equation*}

\item The string $w_{\delta }$ that describes the transition function $%
\delta $. Notice that $\delta $ can be described by its table, which is a
finite list of triples of the form $\left( p,a,q\right) ,$ where $p,q\in
\left\{ 1,...,n\right\} $ and $a\in \left\{ 1,...,l\right\} .$ The triple $%
\left( p,a,q\right) $ can be described by the string $1^{p}01^{a}01^{q}.$ If 
$\delta $ is equal to 
\begin{equation*}
\left( p_{1},a_{1},q_{1}\right) ,...,\left( p_{s},a_{s},q_{s}\right) ,
\end{equation*}%
we set 
\begin{equation*}
w_{\delta
}=1^{p_{1}}01^{a_{1}}01^{q_{1}}001^{p_{2}}01^{a_{2}}01^{q_{2}}00\cdots
001^{p_{s}}01^{a_{s}}01^{q_{s}}.
\end{equation*}
\end{enumerate}

Then, we set 
\begin{equation*}
w_{\mathcal{M}}=w_{state}000w_{alphabet}000w_{accepting}000w_{\delta }.
\end{equation*}%
We observe that $\mathcal{M}$ can be fully reconstructed from the string $w_{%
\mathcal{M}}.$ We also observe that $w_{\mathcal{M}}$ can be easily computed
from $\mathcal{M}$.

\subsection{The Simulator $\mathcal{U}_{3}$}

\begin{definition}
Let $u\in \left\{ 1,\ldots ,l\right\} ^{m}$. We use $\left[ u\right] $ to
denote the binary string%
\begin{equation*}
\left[ u\right] =1^{u\left[ 1\right] }0\cdots 01^{u\left[ m\right] }.
\end{equation*}
\end{definition}

Let $\mathcal{M}$ be a DFA, and let $u$ be an input of $\mathcal{M}$. The
computation of $\mathcal{U}_{3},$ on the input%
\begin{equation*}
w_{state}000w_{alphabet}000w_{accepting}000w_{\delta }0000\left[ u\right] ,
\end{equation*}%
is divided in the following phases:

\begin{enumerate}
\item \textbf{The automaton places its pebbles.}

\begin{itemize}
\item $\mathcal{U}_{3}$ places a pebble at the left end of the substring $%
10110\cdots 01^{n}.$ This pebble is used to keep track of the inner state of 
$\mathcal{M}$. We use the term \textit{state-pebble} to designate this
pebble.

\item $\mathcal{U}_{3}$ places a second pebble at the left end of the
substring $w_{\delta }$. This pebble is used to search the table of $\delta $%
. We use the term \textit{search-pebble} to designate this pebble.

\item $\mathcal{U}_{3}$ places its third pebble at the left end of the
substring $\left[ u\right] $. This pebble is used to simulate the one-way
movement of the scanning head. We use the term \textit{scanning-pebble} to
designate this pebble.
\end{itemize}

\item \textbf{The automaton uses its pebbles.}

Let us suppose that 
\begin{equation*}
\widehat{\delta _{\mathcal{M}}}\left( 1,u\left[ 1,...,i\right] \right) =q%
\text{ and }\delta _{\mathcal{M}}\left( q,u\left[ i+1\right] \right) =p.
\end{equation*}%
Let us also suppose that the state-pebble of $\mathcal{U}_{3}$ is placed
over the substring $1^{q}$ that represents the state $q$, while the
scanning-pebble is placed over the substring that encodes the $i$-th
character of $u.$ This pebble configuration suitably represents the
configuration reached by $\mathcal{M}$ after $i$ transitions. Thus, we are
assuming that the simulation has worked well until this point. We can
suppose that the current internal state of $\mathcal{U}_{3}$ indicates that
the simulation of the $i$-th transition of $\mathcal{M}$ has just ended, and
that it is time to begin with the simulation of the $\left( i+1\right) $-th
step. Then, the automaton moves the scanning pebble to the substring that
encodes the $\left( i+1\right) $-th character of $u$. Suppose that this
character is equal to $a$. Notice that $\mathcal{U}_{3}$ can use the
search-pebble and the scanning-pebble to recognize the substrings of $%
w_{\delta }$ that encode triples of the form $\left( -,a,-\right) $. Notice
also that $\mathcal{U}_{3}$ can use the search-pebble and the state-pebble
to recognize the substrings of $w_{\delta }$ that encode triples of the form 
$\left( q,-,-\right) $. Then, $\mathcal{U}_{3}$ can use its three pebbles to
locate the substring of $w_{\delta }$ that encodes the unique triple in $%
\delta $ that looks like $\left( q,a,-\right) $. Suppose that the substring $%
1^{q}01^{a}01^{p}$ has been located. Notice that $\mathcal{U}_{3}$ can use
the search-pebble and the state-pebble to place the latter pebble over the
substring $1^{p}$ that represents state $p$. Notice that, by doing so, $%
\mathcal{U}_{3}$ has simulated a further step of the computation of $%
\mathcal{M}$ on the input $u$.
\end{enumerate}

It is important to note that we can use (basically) the same automaton $%
\mathcal{U}_{3}$ to simulate 2DFAs.

\subsection{Kolmogorov complexity and automatic complexity}

Let us fix an universal Turing machine $\mathcal{U}$ and let $w\in \Sigma
^{\ast }.$ The Kolmogorov complexity of $w$ is the size of the smallest
program for $\mathcal{U}$ that accepts $w$ and rejects any other string in $%
\Sigma ^{\left\vert w\right\vert }.$ We define the \textit{automatic
complexity} of $w$ accordingly, see \cite{Kjos}:

\begin{definition}
Let $w\in \Sigma ^{\ast }.$ The automatic complexity of $w$, denoted $%
\mathcal{K}_{\mathcal{U}_{3}}\left( w\right) ,$ is equal to%
\begin{equation*}
\left\{ 
\begin{array}{c}
\left\vert w_{\mathcal{M}}\right\vert :\mathcal{M}\text{ is a DFA that
accepts }w \\ 
\text{and rejects any other string in }\Sigma ^{\left\vert w\right\vert }%
\end{array}%
\right\} .
\end{equation*}
\end{definition}

\begin{question}
Consider the problem:\medskip

\begin{itemize}
\item \textit{Input: }$\left( w,k\right) $\textit{, where }$w$\textit{\ is a
finite word and }$k$\textit{\ is a positive integer.}

\item \textit{Problem: decide whether }$\mathcal{K}_{\mathcal{U}_{3}}\left(
w\right) \geq k$\textit{.}\medskip
\end{itemize}

What is the complexity of this problem?
\end{question}

\section{The smallest programmable machine}

Let $Q_{N}=\left\{ 1,\ldots ,N\right\} $. We can use just two pebbles (and
the architecture of $\mathcal{U}_{3}$) to simulate all the DFAs with at most 
$N$ states. We use $\mathcal{U}_{2,N}$ to denote the corresponding
two-pebble automaton. It is worth to emphasize that $\mathcal{U}_{2,N}$ is a
Turing machine with the ability of executing infinitely many different
programs (DFAs).

\begin{example}
$\mathcal{U}_{2,1}$ only accepts trivial languages. $\mathcal{U}_{2,2}$
accepts all the alphabet testable languages (i.e., languages of the form 
$\bigcup_{a \in \Gamma} \Sigma^{*}a\Sigma^{*}$). What
languages are accepted by $\mathcal{U}_{2,3}$ and $\mathcal{U}_{2,4}$ (by
the programs that can be executed by these two programmable machines)?
\end{example}

It could be argued that the set $\left\{ \mathcal{U}_{2,N}:N\geq 2\right\} $
contains the smallest (simplest) programmable machines. We show that, for
all $N\geq 5$, the programmable machine $\mathcal{U}_{2,N}$ is hard to
analyze.

\begin{definition}
Let $u,v\in \Sigma ^{\ast }.$ We define:%
\begin{equation*}
Sep\left( u,v\right) =\min \left\{ 
\begin{array}{c}
k:\text{there exists a DFA with }k\text{ states} \\ 
\text{that separates }u\text{ from }v%
\end{array}%
\right\} .
\end{equation*}
\end{definition}

We have:%
\begin{equation*}
Sep\left( u,v\right) =\min \left\{ 
\begin{array}{c}
N:\text{there exists a program for }\mathcal{U}_{2,N} \\ 
\text{that separates }u\text{ from }v%
\end{array}%
\right\} .
\end{equation*}

Consider the following problem:\medskip

\begin{problem}
We use \textbf{ANAL}$\left( \mathcal{U}_{2,5}\right) $ to denote the
following algorithmic problem. 
\end{problem}

\textit{Input: }$\left( w,u\right) $\textit{, where }$w$\textit{\ and }$u$ 
\textit{are finite words.}

\textit{Problem: decide whether there exists a program for }$\mathcal{U}%
_{2,5}$ \textit{that separates} $u$ \textit{from} $w$\textit{.}\medskip 

We use \textbf{ANAL}$\left( \mathcal{U}_{2,5}\right) $ to denote the above
problem. We show that this problem is NP-complete.

\subsection{Identities of finite semigroups}

\begin{definition}
Let $k>0$. A transformation semigroup over the set $\left\{ 1,...,k\right\} $
is a subset of $\left\{ 1,...,k\right\} ^{\left\{ 1,...,k\right\} }$ that is
closed under composition.
\end{definition}

\begin{example}
The set $\left\{ 1,...,k\right\} ^{\left\{ 1,...,k\right\} }$ is itself a
transformation semigroup and it is called the \textit{full transformation
semigroup} of order $k.$ We use the symbol $\mathcal{F}_{k}$ to denote this
semigroup.
\end{example}

\begin{definition}
Let $\Sigma $ be a finite alphabet. An \textit{interpretation} of $\Sigma $
in the semigroup $\mathcal{F}_{k}$ is a function $\alpha :\Sigma \rightarrow 
\mathcal{F}_{k}$ that assigns to each letter in $\Sigma $ an element of $%
\mathcal{F}_{k}$. Given $w\in \Sigma ^{\ast }$ and given $\alpha $ we
interpret $w$ as the function%
\begin{equation*}
f_{w,\alpha }=\alpha \left( w\left[ n\right] \right) \circ \cdots \circ
\alpha \left( w\left[ 1\right] \right) .
\end{equation*}
\end{definition}

Let $\alpha :\Sigma \rightarrow \mathcal{F}_{k}$ be an interpretation. We
have:

\begin{itemize}
\item $\alpha $ determines a DFA $\mathcal{M}_{\alpha }$ whose input
alphabet equals $\Sigma $ and whose set of internal states coincides with
the set $\left\{ 1,...,k\right\} ,$ and

\item $\alpha $ determines a semigroup of $\mathcal{F}_{k}$, the semigroup
spanned by the image of $\alpha .$
\end{itemize}

\begin{definition}
We say that $\left( u,v\right) \in \Sigma ^{\ast }$ is an identity for $%
\mathcal{F}_{k}$ if and only if for all interpretation $\alpha :\Sigma
\rightarrow \mathcal{F}_{k}$ the equality $f_{u,\alpha }=f_{v,\alpha }$
holds.
\end{definition}

Suppose that $\left( u,v\right) $ is an identity for $\mathcal{F}_{k}.$ Then:

\begin{enumerate}
\item For all $i\leq k,$ the pair $\left( u,v\right) $ is an identity for $%
\mathcal{F}_{i}.$

\item If $u\neq v,$ there exists $k$ such that $\left( u,v\right) $ is not
an identity for $\mathcal{F}_{k}.$
\end{enumerate}

\begin{definition}
We define:%
\begin{equation*}
I\left( u,v\right) =\max \left\{ k:\left( u,v\right) \text{ is an identity
for }\mathcal{T}_{k}\right\} .
\end{equation*}
\end{definition}

Consider the following problem:\medskip 

\begin{problem}
\textbf{IDENT}

\begin{itemize}
\item \textit{Input: }$\left( u,v,k\right) ,$ \textit{where }$u,v$\textit{\
are strings and }$k$\textit{\ is a positive integer.}

\item \textit{Problem: decide whether }$I\left( u,v\right) \geq k$\textit{.}%
\smallskip
\end{itemize}
\end{problem}

IDENT is co-$\mathcal{NP}$ hard; see \cite{Volkov}. Furthermore, the
following subproblem is also co-$\mathcal{NP}$ hard:%
\begin{equation*}
\text{\textbf{IDENT}}_{5}=\left\{ \left( u,v\right) :\left( u,v,5\right) \in 
\text{\textbf{IDENT}}\right\} .
\end{equation*}%
We use this to prove that \textbf{ANAL}$\left( \mathcal{U}_{2,5}\right) $ is 
$\mathcal{NP}$-hard.

\begin{lemma}
$I\left( u,v\right) +1=Sep\left( u,v\right) .$
\end{lemma}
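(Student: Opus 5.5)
We prove the two inequalities $Sep(u,v)\le I(u,v)+1$ and $Sep(u,v)\ge I(u,v)+1$ separately. Throughout we read $\mathcal{T}_k$ in the definition of $I$ as $\mathcal{F}_k$, and we assume $u\neq v$, so that $I(u,v)$ is finite by item 2 of the remarks on identities. The dictionary we use is the one already noted in the paper: an interpretation $\alpha:\Sigma\to\mathcal{F}_k$ is the same thing as the transition structure of a DFA $\mathcal{M}_\alpha$ with state set $\{1,\ldots,k\}$. With this identification $\widehat{\delta}(q,w)=f_{w,\alpha}(q)$, since the composition in $f_{w,\alpha}$ applies the letters of $w$ from left to right.

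\emph{Upper bound.} Put $k=I(u,v)+1$. By maximality of $I$, the pair $(u,v)$ is not an identity for $\mathcal{F}_k$. Choose an interpretation $\alpha$ and a state $q$ with $f_{u,\alpha}(q)\neq f_{v,\alpha}(q)$. Build the DFA $\mathcal{M}_\alpha$ with initial state $q$ and accepting set $A=\{f_{u,\alpha}(q)\}$; renaming states by a permutation makes the initial state $1$ without changing the number of states. This DFA accepts $u$ and rejects $v$, so $Sep(u,v)\le k$.

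\emph{Lower bound.} Suppose some DFA with $m$ states separates $u$ and $v$, and suppose for contradiction that $m\le I(u,v)$. Its transition function defines an interpretation $\alpha:\Sigma\to\mathcal{F}_m$ with $f_{u,\alpha}(q_0)\neq f_{v,\alpha}(q_0)$, because one word ends in an accepting state and the other does not. Hence $(u,v)$ is not an identity for $\mathcal{F}_m$. But $(u,v)$ is an identity for $\mathcal{F}_{I(u,v)}$, so by item 1 of the remarks (identities pass to $\mathcal{F}_i$ for $i\le k$) it is an identity for $\mathcal{F}_m$, a contradiction. Therefore $m\ge I(u,v)+1$.

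Two points need care. The first is that item 1 (downward closure) is used essentially. We justify it by extending any $\beta:\Sigma\to\mathcal{F}_i$ to $\mathcal{F}_k$ so that each letter fixes every point of $\{i+1,\ldots,k\}$; the functions $f_{w,\beta}$ then restrict faithfully to $\{1,\ldots,i\}$. The second is a convention in the definition of $Sep$: it must allow DFAs with partial transition functions, or else require total ones, consistently with interpretations into $\mathcal{F}_k$, which are total. We take complete DFAs throughout, and then the argument above is direct.
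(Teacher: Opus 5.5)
Your proof is correct and follows the same route as the paper. Both identify $N$-state DFAs with interpretations $\alpha:\Sigma\to\mathcal{F}_N$ and read off the two inequalities from the fact that a DFA separates $u$ from $v$ exactly when $f_{u,\alpha}\neq f_{v,\alpha}$; yours is more careful, since it makes explicit the choice of initial state $q$ and accepting set $\{f_{u,\alpha}(q)\}$, the downward closure of identities, and the convention of complete DFAs, all of which the paper leaves implicit.
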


\begin{proof}
Let $\Sigma $ be the alphabet of $u$ and $v.$ Suppose $\Sigma =\left\{
a_{1},...,a_{n}\right\} .$ Let 
\begin{equation*}
f_{a_{1}},...,f_{a_{n}}:\left\{ 1,...,N\right\} \rightarrow \left\{
1,...,N\right\} .
\end{equation*}%
These $n$ functions determine a DFA with $N$ states and input alphabet $%
\Sigma .$ We use the symbol $\mathcal{M}_{f_{a_{1}},...,f_{a_{n}}}$ to
denote this automaton. Automaton $\mathcal{M}_{f_{a_{1}},...,f_{a_{n}}}$ is
equal to $\left( \left\{ 1,...,N\right\} ,\Sigma ,\delta \right) ,$ where
the transition function $\delta $ is defined by the equation 
\begin{equation*}
\delta \left( q,a_{i}\right) =f_{a_{i}}\left( q\right) 
\end{equation*}%
Notice that $\mathcal{M}_{f_{a_{1}},...,f_{a_{n}}}$ separates $\left(
u,v\right) $ if and only if the equality $f_{u,\alpha }=f_{v,\alpha }$ does
not hold, where $\alpha $ is the interpretation given by the equalities: 
\begin{equation*}
\alpha \left( a_{i}\right) =f_{a_{i}},\text{ for all }a_{i}\in \Sigma .
\end{equation*}%
We get the first implication, that is: if $\left( u,v\right) $ is an
identity for $\mathcal{F}_{N}$ the pair $\left( u,v\right) $ cannot be
separated with $N$ states.

Let $\mathcal{M}$ be a $N$-state automaton with input alphabet $\left\{
a_{1},...,a_{n}\right\} .$ Automaton $\mathcal{M}$ determines an
interpretation $\alpha _{\mathcal{M}}:\left\{ a_{1},...,a_{n}\right\}
\rightarrow \mathcal{F}_{N}$ which is given by:%
\begin{equation*}
\alpha _{\mathcal{M}}\left( a_{i}\right) \left( q\right) =\delta _{\mathcal{M%
}}\left( q,a_{i}\right) 
\end{equation*}%
Notice that $\mathcal{M}$ separates $\left( u,v\right) $ if and only if the
equality $f_{u,\alpha _{\mathcal{M}}}=f_{v,\alpha _{\mathcal{M}}}$ does not
hold. We get the second implication. The lemma follows.
\end{proof}

We obtain:

\begin{theorem}
\label{NP-hard}\textbf{ANAL}$\left( \mathcal{U}_{2,5}\right) $ is $\mathcal{%
NP}$-complete.
\end{theorem}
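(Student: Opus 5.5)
The proof splits into membership in $\mathcal{NP}$ and $\mathcal{NP}$-hardness. The hardness part goes through the preceding lemma $I(u,v)+1=Sep(u,v)$ and the co-$\mathcal{NP}$-hardness of \textbf{IDENT}$_{5}$ cited from \cite{Volkov}.

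\emph{Membership in $\mathcal{NP}$.} A program for $\mathcal{U}_{2,5}$ is a DFA whose state set is $\{1,\ldots,5\}$. It is determined by an interpretation $\alpha:\Sigma\rightarrow\mathcal{F}_{5}$, together with the choice of initial and accepting states. Only letters occurring in $u$ or $w$ matter, so the certificate has size $O(|\Sigma|)$, which is linear in the input. The verifier computes $f_{u,\alpha}$ and $f_{w,\alpha}$ by composing $|u|+|w|$ maps of a $5$-element set, which takes polynomial time. It then checks that the two runs from the initial state end on opposite sides of the accepting set.

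\emph{Hardness.} I will reduce the complement of \textbf{IDENT}$_{5}$ to \textbf{ANAL}$(\mathcal{U}_{2,5})$ by the identity map $(u,v)\mapsto(v,u)$. Since \textbf{IDENT}$_{5}$ is co-$\mathcal{NP}$-hard, its complement is $\mathcal{NP}$-hard. The chain of equivalences is:
\begin{equation*}
(u,v)\notin\textbf{IDENT}_{5}\iff I(u,v)\leq 4\iff Sep(u,v)\leq 5 .
\end{equation*}
The second equivalence is the lemma. It then remains to pass from $Sep(u,v)\leq 5$ to the existence of a $5$-state program. A DFA with fewer than $5$ states that separates $u$ from $v$ can be padded with unreachable states up to exactly $5$ states, so $Sep(u,v)\leq 5$ holds precisely when some program for $\mathcal{U}_{2,5}$ separates $u$ from $v$. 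This is the \textbf{ANAL} condition.

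\emph{Main obstacle.} The delicate step is aligning the notions. The lemma's proof treats ``separates'' as $f_{u,\alpha}\neq f_{v,\alpha}$, whereas a DFA separates by acceptance from a fixed initial state. One direction is immediate: distinct functions differ at some state $q$. Taking $q$ as the initial state and $\{f_{u,\alpha}(q)\}$ as the accepting set yields a separating automaton. The converse is immediate as well: if some DFA accepts exactly one of $u,v$, its two runs end in different states, so $f_{u,\alpha}\neq f_{v,\alpha}$.

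The index conventions also need care: $I$ is defined via $\mathcal{T}_{k}$ (read as $\mathcal{F}_{k}$), and $I(u,v)+1=Sep(u,v)$ must be read with ``identity for $\mathcal{F}_{k}$ for all $k\leq I$'', using the downward closure noted earlier. I would therefore state the equivalence explicitly at the threshold $5$ to avoid an off-by-one error in the reduction.
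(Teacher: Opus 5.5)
Your proposal is correct and follows the same route as the paper: membership in $\mathcal{NP}$ by guessing a five-state program, and hardness by observing that \textbf{ANAL}$(\mathcal{U}_{2,5})$ coincides with the complement of \textbf{IDENT}$_{5}$ via the lemma $I(u,v)+1=Sep(u,v)$. Your additional remarks make explicit details that the paper's proof leaves implicit: choosing the initial state and accepting set where $f_{u,\alpha}$ and $f_{v,\alpha}$ differ, and padding smaller automata to exactly five states.
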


\begin{proof}
The problem \textbf{ANAL}$\left( \mathcal{U}_{2,5}\right) $ belongs to $%
\mathcal{NP}$. We show that this problem is $\mathcal{NP}$-hard.

Let $\left( u,v\right) $ be an instance of \textbf{IDENT}$_{5}$ (\textbf{ANAL%
}$\left( \mathcal{U}_{2,5}\right) $). This pair is an identity of $\mathcal{F%
}_{5}$ if and only if it cannot be separated with five states. We get that 
\textbf{ANAL}$\left( \mathcal{U}_{2,5}\right) $ coincides with $co$-\textbf{%
IDENT}$_{5}$. The theorem follows.
\end{proof}

We also obtain the corollary below. First, consider the following
algorithmic problem:\medskip 

\begin{problem}
\textbf{SEP}

\begin{itemize}
\item \textit{Input: }$\left( u,v,k\right) ,$ \textit{where }$u,v$\textit{\
are strings and }$k$\textit{\ is a positive integer.}

\item \textit{Problem: decide whether }$Sep\left( u,v\right) \geq k$\textit{.%
}\smallskip
\end{itemize}
\end{problem}

We have:

\begin{corollary}
\label{hardness}The problem \textbf{SEP }is $\mathcal{NP}$-complete.
\end{corollary}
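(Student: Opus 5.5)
The plan is to obtain Corollary~\ref{hardness} directly from Theorem~\ref{NP-hard} together with the Lemma $I(u,v)+1=Sep(u,v)$. I will treat \textbf{SEP} as the paper does: as the parametrized version of \textbf{ANAL}$(\mathcal{U}_{2,5})$, in which the fixed bound $5$ on the number of states is replaced by the input parameter $k$. The Lemma is what lets us pass freely between the identity formulation (via $I$) and the separation formulation (via $Sep$). The two things to establish are a polynomial certificate, which mirrors the membership part of Theorem~\ref{NP-hard}, and hardness, which comes from restricting $k$ to the constant $5$.

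\emph{Step 1 (bounding the parameter).} First I show that $k$ can be assumed to be polynomially bounded in $|u|+|v|$. Take $u\neq v$ and let $m=\max(|u|,|v|)$.
\begin{itemize}
\item If the lengths differ, the DFA with states $0,\ldots,m+1$ that counts the letters read, saturating at $m+1$, and accepts exactly at state $|u|$ separates the pair.
\item If the lengths agree, let $i$ be the first position where $u$ and $v$ differ. A DFA that counts up to $i-1$, tests the $i$-th letter, and then moves to one of two absorbing states separates the pair.
\end{itemize}
Hence $Sep(u,v)\leq m+3$. Any instance with $k$ above this bound has a trivially computable answer, so from now on $k\leq m+3$.

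\emph{Step 2 (membership).} With $k$ bounded, a candidate program for $\mathcal{U}_{2,k}$ is a transition table of size $O(k|\Sigma|)$, which is polynomial in the input. Guess such a table, or equivalently, by the Lemma, an interpretation $\alpha:\Sigma\rightarrow\mathcal{F}_{k}$. Then verify in polynomial time that $f_{u,\alpha}\neq f_{v,\alpha}$, by running the automaton on $u$ and on $v$ from every start state. This is exactly the verification step that puts \textbf{ANAL}$(\mathcal{U}_{2,5})$ in $\mathcal{NP}$ in Theorem~\ref{NP-hard}, with $5$ replaced by $k$.

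\emph{Step 3 (hardness).} The map $(u,v)\mapsto(u,v,5)$ is clearly polynomial. By the Lemma, the question that \textbf{SEP} asks about $(u,v,5)$ is the question of whether the pair is separable by a $5$-state program, i.e.\ whether $(u,v)$ fails to be an identity of $\mathcal{F}_{5}$. This is precisely the question \textbf{ANAL}$(\mathcal{U}_{2,5})$ asks about $(u,v)$. Theorem~\ref{NP-hard} shows that this problem coincides with co-\textbf{IDENT}$_{5}$ and is $\mathcal{NP}$-hard. Therefore \textbf{SEP} is $\mathcal{NP}$-hard, and combined with Step 2 it is $\mathcal{NP}$-complete.

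The step I expect to need the most care is aligning the threshold in \textbf{SEP} with the quantifier structure of the Lemma. The reduction at $k=5$ must land on the existential side ("some $5$-state program separates $u$ from $v$"), not on the universal side ("$(u,v)$ is an identity of $\mathcal{F}_{5}$"). I would handle this by writing out the chain of equivalences from the Lemma explicitly for $N=5$, just as in the proof of Theorem~\ref{NP-hard}, so that the instance $(u,v,5)$ of \textbf{SEP} is visibly the same question as the instance $(u,v)$ of \textbf{ANAL}$(\mathcal{U}_{2,5})$.
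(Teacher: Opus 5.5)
The gap is in exactly the step you flagged, and you resolved it the wrong way. By definition, $(u,v,k)\in\textbf{SEP}$ means $Sep(u,v)\geq k$, i.e.\ no DFA with at most $k-1$ states separates $u$ from $v$. By the Lemma this is $I(u,v)\geq k-1$, i.e.\ $(u,v)$ is an identity of $\mathcal{F}_{k-1}$.

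\begin{itemize}
\item The instance $(u,v,5)$ therefore asks whether $(u,v)$ is an identity of $\mathcal{F}_{4}$. This is the universal side, with the index shifted by one. It is not the question \textbf{ANAL}$(\mathcal{U}_{2,5})$ asks, so Step~3 is not a reduction to \textbf{SEP}.
\item Your certificate in Step~2 fails for the same reason. A $k$-state separating DFA witnesses $Sep(u,v)\leq k$, the opposite inequality. A yes-instance of \textbf{SEP} is one where \emph{no} $(k-1)$-state DFA separates, a universal condition with no evident short certificate.
\item Your Step~1 bound, together with the map $(u,v)\mapsto(u,v,6)$, shows that \textbf{SEP} as literally defined is co-$\mathcal{NP}$-complete. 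The map sends $\textbf{IDENT}_{5}$ into \textbf{SEP} because $I(u,v)\geq 5 \iff Sep(u,v)\geq 6$. So \textbf{SEP} could be $\mathcal{NP}$-complete only if $\mathcal{NP}$ equals co-$\mathcal{NP}$.
\end{itemize}

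What your three steps do prove, essentially verbatim, is $\mathcal{NP}$-completeness of the threshold problem ``decide whether $Sep(u,v)\leq k$''. This is the complement of \textbf{SEP} after shifting $k$ by one. For that problem, $(u,v)\mapsto(u,v,5)$ is a reduction from \textbf{ANAL}$(\mathcal{U}_{2,5})$: pad any smaller separating DFA with unreachable states to get exactly five. Guessing a $k$-state table is then the right certificate.

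The paper gives no separate argument for the corollary. It presents it as an immediate consequence of Theorem~\ref{NP-hard}, making the same identification you do, so the mismatch is already in the source. To make your proof correct, state explicitly which version you prove: $\mathcal{NP}$-completeness of the ``$\leq k$'' version, or co-$\mathcal{NP}$-completeness of the ``$\geq k$'' version via $\textbf{IDENT}_{5}$. Do not assert that the instance $(u,v,5)$ of \textbf{SEP}, as defined, is the \textbf{ANAL}$(\mathcal{U}_{2,5})$ question.
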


\subsection{Separating binary strings}

The proofs of Theorem \ref{NP-hard} and Corollary \ref{hardness} make
extensive use of the following fact: the definition of \textbf{SEP} did not
impose an upper bound on the size of the alphabets used by the input
strings. We would like to focus on the case $\Sigma =\left\{ 0,1\right\} .$

\begin{definition}
Let $n\geq 1$. We define:%
\begin{equation*}
Sep_{2}\left( n\right) =\max \left\{ Sep\left( u,v\right) :u,v\in \left\{
0,1\right\} ^{n}\right\} .
\end{equation*}
\end{definition}

We have, see \cite{Goralcik} and \cite{Zacharias}:

\begin{enumerate}
\item $Sep_{2}\left( n\right) \in \Omega \left( \log \left( n\right) \right)
.$

\item $Sep_{2}\left( n\right) \in O\left( n^{\frac{1}{3}}\cdot \log
^{7}\left( n\right) \right) $
\end{enumerate}

The gap between upper and lower bounds is very large. We conjecture that,
for all $m\in \mathbb{N}$, the function $Sep_{2}\left( n\right) $ does not
belong to $O\left( \log ^{m}\left( n\right) \right) $. Let us observe that
computational complexity can shed some light on this conjecture. Consider
the following problem:\medskip

\begin{problem}
\textbf{SEP}$_{2}$

\begin{itemize}
\item \textit{Input: }$\left( u,v,k\right) ,$ \textit{where }$u,v$\textit{\
are binary strings and }$k$\textit{\ is a positive integer.}

\item \textit{Problem: decide whether }$Sep\left( u,v\right) \geq k$\textit{.%
}\smallskip
\end{itemize}
\end{problem}

We have:

\begin{proposition}
Assume the exponential time hypothesis (i.e., assume that SAT cannot be
solved in subexponential time). Then, if \textbf{SEP}$_{2}$ is $\mathcal{NP}$%
-complete the function $Sep_{2}\left( n\right) $ does not belong to $O\left(
\log ^{m}\left( n\right) \right) .$
\end{proposition}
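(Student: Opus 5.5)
We argue by contraposition. Suppose that $Sep_{2}(n)\in O(\log^{m}(n))$ for some fixed $m$. From this we build a deterministic algorithm for \textbf{SEP}$_{2}$ running in time $2^{O(\log^{c}(n))}$ for some constant $c$, that is, in quasi-polynomial time. If \textbf{SEP}$_{2}$ is $\mathcal{NP}$-complete, a polynomial-time reduction from SAT to \textbf{SEP}$_{2}$ then yields a quasi-polynomial, and in particular subexponential, algorithm for SAT. This contradicts the exponential time hypothesis.

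The algorithm works as follows. Let $(u,v,k)$ be an instance with $|u|,|v|\leq n$, where $n$ is the input length. We may assume $|u|=|v|$. If the lengths differ, a DFA with at most $O(\log n)$ states counting the length modulo a small prime separates them, so $Sep(u,v)$ is polylogarithmic in this case too. Let $C\log^{m}(n)$ be the assumed upper bound. Then $Sep(u,v)\geq k$ holds automatically when $k\leq 1$, and fails automatically when $k> C\log^{m}(n)$. Otherwise we enumerate all binary DFAs with $N$ states, for $N=1,\dots ,\min (k-1, C\log^{m}(n))$. For each one we simulate it on $u$ and on $v$ in time $O(n)$ and check whether it separates them. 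There are at most $N^{2N}\cdot 2^{N}$ such automata for a fixed $N$, where the $2^{N}$ factor accounts for the choice of accepting states. By the Lemma ($I(u,v)+1=Sep(u,v)$) and its proof, it is equivalent to enumerate the pairs $(f_{0},f_{1})\in \mathcal{F}_{N}^{2}$ and test $f_{u,\alpha }\neq f_{v,\alpha }$, and this yields the same count. The total time is therefore
\begin{equation*}
n\cdot \sum_{N\leq C\log^{m}n}N^{2N}2^{N}=2^{O(\log^{m}(n)\log\log (n))}\subseteq 2^{O(\log^{m+1}(n))}.
\end{equation*}

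Finally, let $R$ be a polynomial-time reduction from SAT to \textbf{SEP}$_{2}$ with output size $n\leq s^{d}$ on formulas of size $s$. Composing $R$ with the algorithm above decides SAT in time $2^{O(d^{m+1}\log^{m+1}(s))}$, which is $2^{o(s)}$. This contradicts ETH. Strictly, ETH is phrased in terms of the number of variables, but formula size bounds the number of variables, and the quasi-polynomial bound is far below $2^{\epsilon s}$ for any fixed $\epsilon$, so this version suffices.

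The step needing most care is the interaction between the reduction and the bound on $Sep_{2}$. The assumption bounds $Sep$ only for equal-length binary strings, whereas reduced instances may have unequal lengths. This is why we handle the unequal-length case separately, with the modular-counting DFA. That DFA exists because $|u|-|v|$ is nonzero and at most $n$, so some prime $p=O(\log n)$ does not divide it. The rest is routine counting.
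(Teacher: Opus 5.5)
The paper states this proposition without proof, so there is no argument of its own to compare with. Your contrapositive is the natural one, and it is correct up to the repairs below. A bound $Sep(u,v)\leq C\log^{m}n$ for all $u\neq v$ makes exhaustive search over DFAs with at most that many states run in time $2^{O(\log^{m+1}n)}$. Composing with a polynomial-time reduction from SAT then gives a quasi-polynomial, hence subexponential, SAT algorithm. Your handling of reduced instances of different lengths, via a modulus of size $O(\log n)$ not dividing $|u|-|v|$, is also right.

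Some small repairs are needed. First, treat $u=v$ separately. No DFA separates $u$ from itself, so $Sep(u,u)=\infty$, and $Sep_{2}(n)$ must be read as a maximum over $u\neq v$. Your shortcut ``answer no when $k>C\log^{m}n$'' is then wrong on such inputs, and a reduction may well output them. So test $u=v$ first and accept. After that you can also drop the constant $C$: search $N=1,2,\ldots ,k-1$ and stop at the first separating DFA, since the search halts by $N=\min(k-1,Sep(u,v))$.

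Second, your ETH remark has the direction reversed. Write $v$ for the number of variables and $s$ for the formula size. The inequality $v\leq s$ does not turn a $2^{o(s)}$ bound into a $2^{o(v)}$ bound. What you need is that $s$ is polynomial in $v$, which holds for 3-CNF after deleting repeated clauses. Then $2^{(\log s)^{O(1)}}=2^{(\log v)^{O(1)}}=2^{o(v)}$; alternatively, cite the sparsification lemma. Under the paper's own phrasing of ETH (``SAT cannot be solved in subexponential time''), your main line already suffices.

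Third, as defined, \textbf{SEP}$_{2}$ asks whether $Sep(u,v)\geq k$, and this is a co-$\mathcal{NP}$ property: a separating DFA with fewer than $k$ states refutes it. So the sensible reading of the hypothesis is a hardness statement, not literal $\mathcal{NP}$-membership. Your proof uses only a polynomial reduction and a deterministic decision procedure. It therefore works equally well if SAT reduces to the complement of \textbf{SEP}$_{2}$.
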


\textbf{Acknowledgement. }The author thanks Centro de Investigaciones
Antonio S\'{a}nchez de Cozar, vereda los Egidos, San Gil, Colombia. Most of
this work was written while the author was visiting this charming research
center in the Santander mountains.

\end{document}